\newtheorem{theorem} {Theorem}
\newtheorem{proposition}[theorem] {Proposition}
\newcommand{\calD}{\mathcal{D}}
\newcommand{\harddistrib}{\calD_{\mathrm{hard}}}
\newcommand{\poly}{\operatorname{poly}}
\let\eps\varepsilon
\let\epsilon\varepsilon
\newcommand{\tdigest}{$t$-digest\xspace}
\newcommand{\reqsketch}{\textsf{ReqSketch}\xspace}
\begin{document}

\title{Theory meets Practice at the Median: \\a worst case comparison of relative error quantile algorithms}

\author[1]{Graham Cormode\thanks{Supported by European Research Council grant ERC-2014-CoG 647557.}}
\author[2]{Abhinav Mishra}
\author[2]{Joseph Ross}
\author[3]{Pavel Vesel{\'{y}}\thanks{Work done while the author was at the University of Warwick.
Partially supported by European Research Council grant ERC-2014-CoG 647557, by GA \v{C}R project 19-27871X and by Charles University project UNCE/SCI/004.}}

\affil[1]{University of Warwick, \texttt{G.Cormode@warwick.ac.uk}}
\affil[2]{Splunk, \texttt{\{amishra,josephr\}@splunk.com}}
\affil[3]{Charles University, \texttt{vesely@iuuk.mff.cuni.cz}}

\date{}

\maketitle

\begin{abstract}
Estimating the distribution and quantiles of data is a foundational task in data mining and data science. 
We study algorithms which provide accurate results for extreme quantile queries using a small amount of space, thus helping to understand the tails of the input distribution.
Namely, we focus on two recent state-of-the-art solutions: \tdigest and \reqsketch.
While $t$-digest is a popular compact summary which works well in a variety of settings,
\reqsketch comes with formal accuracy guarantees at the cost of its size growing as new observations are inserted.
In this work, we provide insight into which conditions make one preferable to the other.
Namely, we show how to construct inputs for \tdigest that induce an almost arbitrarily large error
and demonstrate that it fails to provide accurate results even on i.i.d.\ samples from a highly non-uniform distribution.
We propose practical improvements to \reqsketch, making it faster than \tdigest,
while its error stays bounded on any instance.
Still, our results confirm that \tdigest remains more accurate on the ``non-adversarial'' data encountered in practice.
\end{abstract}

\section{Introduction}

Studying the distribution of data is a foundational task in data mining and data science. 
Given observations from a large domain, we will often want to track the cumulative frequency distribution, to understand the behavior, or to identify anomalies. %
This cumulative distribution function (CDF) is also known variously as the order statistics, generalizing the median, and the quantiles.  
When we have a very large number of input observations, an exact characterization is excessively large, and we can be satisfied with an approximate representation, i.e., a compact function whose distance from the true CDF is bounded.  
Recent work has argued that, rather than a uniform error bound, it is more important to capture the detail of the tail of the input distribution. 

Faced with the problem of processing large volumes of distribution data, there have been many proposals of approximate quantile algorithms to extract the desired compact summary.  
These are designed to handle the input when seen as a stream of updates, or as distributed observations.
Even though these various algorithms all draw on the same set of motivations, the emphasis can vary widely. 
Some works view the question primarily as one of computational complexity, and seek optimal bounds on the space usage, even if this entails very intricate algorithmic designs and lengthy technical proofs. 
Other works aspire to highly practical algorithms that can be implemented and run efficiently on real workloads. 
Although their authors might object, we can crudely characterize these two perspectives as ``theoretically-driven'' and ``pragmatic''.  

In this paper, we study the behavior of two recent algorithms for the quantiles problem, which we take to embody these two mindsets.  
The pragmatic approach is represented by the \tdigest, which is a flexible framework that has been reported as being adopted in practice by various tech-focused companies (e.g., Microsoft, Facebook, Google~\cite{dunning21}). 
The theoretical approach is represented by the \reqsketch~\cite{cormode2020relative}, a work building on a line of prior theoretical papers, each making incremental improvements to the asymptotic bounds.  

On first glance, the conclusion seems obvious. 
\reqsketch is suited for algorithmic study, and contributes to our understanding of the fundamental computational complexity of the problem.  
Coding it up is not too hard, but the constants hidden in the ``big-O'' analysis mean that it requires a fair amount of space to store and so is unlikely to be competitive with the pragmatic approach. 
Meanwhile, the \tdigest is very compact, and gives accurate answers to realistic workloads, especially those uniformly distributed over the domain.
Its widespread adoption should give confidence that this is a sensible choice to implement. 

Our contribution in this paper is to tell a more nuanced story, with a less clearcut ending. 
We dive into the inner workings of the \tdigest, and show how to construct inputs that lead to almost arbitrarily bad accuracy levels. 
While it may seem that such inputs are highly unlikely to be encountered in practice,
\tdigest may fail to provide accurate estimates even if input items are repeatedly drawn from a non-uniform distribution, and we demonstrate such distributions.
Meanwhile, we engineer an implementation of \reqsketch that improves its time and space efficiency,
making it faster than \tdigest.
The outcome is a collection of empirical results showing \reqsketch can be vastly preferable to \tdigest, even on i.i.d. samples, flipping the conclusion for uniformly distributed inputs. %

Still, the conclusion is therefore less straightforward than one might wish for.  
Both the input distributions and the careful construction which lead to high error for the \tdigest rely on a highly non-uniform data distribution 
with numbers ranging from infinitesimal to astronomically large.
For most realistic data patterns encountered in practice, the \tdigest will remain a compelling choice, due to its simplicity and ease of use. 
But, particularly since quantile monitoring is often needed to track deviations from expected behavior, there is now a case to adopt \reqsketch for scenarios where a strong guarantee is needed across all eventualities, no matter how unlikely they might appear.  
Although \reqsketch has higher overheads on average, and relies on internal randomness, its worst case is much more tightly bounded.  
So, in summary, the practical approach nevertheless has flaws -- at least in theory -- while the theoretical approach is not so impractical as it may first appear.

\paragraph{Paper outline.}
We start with necessary definitions and a brief review of related work in the next section.
Then we describe both \tdigest and \reqsketch in Section~\ref{sec:algs}, where we also provide a short summary of practical improvements of the latter sketch.
In Section~\ref{sec:attack}, we outline a worst-case input construction, showing that \tdigest may suffer an almost arbitrarily large error.
Empirical results appear in Section~\ref{sec:empirical}, where we demonstrate the error behavior of both algorithms on the aforementioned worst-case inputs
and also on inputs that consist of i.i.d.\ samples from a distribution. Finally, we provide a comparison of average update times in Section~\ref{sec:updateTime}.

\section{Preliminaries}

\subsection{Definitions}
\label{sec:defs}

We consider algorithms that operate on a stream of items drawn from some large domain. 
This could be any domain $U$ equipped with a total order (e.g., strings with lexicographic comparison), or a more restricted setting, such as the reals, where we additionally have arithmetic operations. 

The core notion needed is that of the \textit{rank} of an element from the domain, which is the number of items from the input that are (strictly) smaller than the given element.  
Formally, for an input stream of $n$ items $\sigma = \{x_1, \ldots , x_n\}$, the rank of element $y$ is given by 
$R_\sigma(y) = |\{ i : x_i < y\}|$. 
There is some nuance in how to handle streams with duplicated elements, but we will gloss over it in this presentation; see~\cite{DataSketches-Quantiles-Definitions} for a discussion of this nuance. 

The quantiles are those elements which achieve specific ranks. 
For example, the median of $\sigma$ is $y$ such that $R_\sigma(y) = n/2$, and the $p$th percentile is $y$ such that $R_\sigma(y) = pn/100$. 
More generally, we seek the $q$th quantile with $R_\sigma(y) = qn$ for $0\leq q \le 1$.
Again, ambiguity can arise since there can be a range of elements satisfying this definition, but this need not concern us here.  

We consider algorithms that aim to find approximate quantiles via approximate ranks. 
That is, they seek elements whose rank is sufficiently close to the requested quantile. 
Specifically, the \textit{quantile error} of reporting $y$ as the $q$th quantile is given by $|q - R_{\sigma}(y)/n|$.
A standard observation is that if we have an algorithm to find the approximate rank of an item, as $\hat{R}_{\sigma}(y)$, this suffices to answer quantile queries, after accounting for the probability of making errors. 
In what follows, we focus on the accuracy of rank estimation. 

Typically, we would like to have some guarantee on rank estimation accuracy.  
A uniform (additive) rank estimation guarantee asks that the error on all queries be bounded by the same fraction of the input size, i.e., $|\hat{R}_{\sigma}(y) - R_{\sigma}(y)| \leq \epsilon n$, for $\epsilon < 1$. 
This will ensure that all quantile queries have the same accuracy. 
However, it is noted that in practice we often want greater accuracy on the tails of the distribution, where we can see more variation, compared to the centre which is usually more densely packed and unvarying. 
This leads to notions such as the relative error guarantee, $|\hat{R}_{\sigma}(y) - R_{\sigma}(y)| \leq \epsilon R_{\sigma}(y)$, or more generally 
$|\hat{R}_{\sigma}(y) - R_{\sigma}(y)| \leq \epsilon f(R_\sigma(y)/n)$, 
where $f$ is a \textit{scale function} which captures the desired error curve as a function of the location in quantile space.  
In this work, we focus on the relative error guarantee and related scale functions (based on logarithmic functions).
As defined, relative error focuses on the low end of the distribution (i.e., the elements with low rank), but it is straightforward to flip this to the high end by using the scale function $(1 - R_{\sigma}(y)/n)$, or to make it symmetric with the scale function $\min(R_{\sigma}(y)/n, 1-R_{\sigma}(y)/n)$. 

We remark that guarantees on the rank error stated above are more general and mathematically natural to study than guarantees on the value error, which requires, for example, for a quantile query $q$ to return an item $y$ with $y = (1\pm \epsilon)\cdot x$, where $R(x) = q$. The error in value space is not invariant under translating the data (and can be made arbitrarily bad by applying a translation). For applications in which value space guarantees are desired, a simple logarithmic histogram is optimal in the fixed-range regime.

\subsection{Related Work}
\label{sec:related}

Most related work has focused on providing uniform error guarantees. 
It is folklore that a random sample of size $O(\frac{1}{\epsilon^2})$ items from the input is sufficient to provide an $\epsilon$ additive error estimate for any quantile query, with constant probability. 
Much subsequent work has aimed to improve these space bounds. 
Munro and Paterson~\cite{munro1980selection} gave initial results, but it was not until two decades later that Manku et al. reinterpreted this (multipass) algorithm as a quantile summary taking one pass over a stream, and showed improved bounds of $O(\frac{1}{\epsilon}\log^2 \epsilon n)$~\cite{manku1998approximate}. 
For many years, the state of the art was the Greenwald-Khanna (GK) algorithm, which comes in two flavours: a somewhat involved algorithm with a $O(\frac{1}{\epsilon} \log \epsilon n)$ space cost, and a simplified version without a formal guarantee, but good behavior in practice~\cite{greenwald2001space}. Subsequent improvements came from Felber and Ostrovsky~\cite{felber2015randomized}, who proposed combining sampling with a constant number of GK instances; and Agarwal et al. \cite{agarwal2013mergeable} who adapted the Manku et al. approach with randomness. 
Both these approaches removed the (logarithmic) dependence on $n$  from the space cost. 
Most recently, Karnin et al.~\cite{karnin2016optimal} (KLL) further refined the randomized approach to show an $O(\frac{1}{\epsilon})$ space bound. 
The tightest bound is achieved by a more complicated variant of the approach; a slightly simplified approach with weaker guarantees is implemented in the Apache DataSketches library~\cite{datasketches}. %

The study of other scale functions such as relative error can be traced back to the work of Gupta and Zane~\cite{Gupta:2003:CIL:644108.644150}, who gave a simple multi-scale sampling-based approach, with a space bound of $O(\frac{1}{\epsilon^3} \poly \log n)$. 
Subsequent heuristic and theoretical work aimed to reduce this cost, leading to a bound of $O(\frac{1}{\epsilon} \log^3 (\epsilon n))$ using a deterministic merge \& prune strategy due to Zhang and Wang~\cite{zhang2006space}. 
The cubic dependence on $\log n$ can be offputting, and recent work in the form of the \reqsketch~\cite{cormode2020relative} has reduced this bound by adopting a randomized algorithm inspired by the KLL method. 

The theoretical study of the quantiles problem has also led to lower bounds on the amount of space needed by any algorithm for the problem, based on information theoretic arguments about how many items from the input need to be stored. 
A simple argument shows that a uniform error guarantee requires space to store $\Omega(1/\epsilon)$ items from the input, and a relative error requires
$\Omega(1/\epsilon \log \epsilon n)$ (see, e.g., \cite{cormode2020relative}). 
Some more involved arguments show stronger lower bounds for uniform error of
$\Omega(\frac{1}{\epsilon} \log 1/\epsilon)$~\cite{hungting10}
and $\Omega(\frac{1}{\epsilon} \log \epsilon n)$~\cite{cormode2019tight}, but only for \textit{deterministic} and \textit{comparison-based} algorithms. 
The restriction to comparison-based means that the method can only apply comparisons to items and is not permitted to manipulate them (e.g., by computing the average of a set of items). 
Nevertheless, these lower bounds are sufficient to show that the analysis of certain approaches described above, such as the GK algorithm, is asymptotically tight, and cannot be improved further. 
The deterministic bounds can also be extended to apply to randomized algorithms and non-uniform guarantees, becoming weaker as a result. 

There are other significant approaches to the quantiles problem to study. 
The moment-based sketch takes a statistical approach, by maintaining the moments of the input stream (empirically), and using these to fit the maximum entropy distribution which agrees with these moments~\cite{gan2018moment}. 
This requires the assumption that the model fitting procedure will yield a distribution that closely agrees with the true distribution. 
The DDSketch aspires to achieve a ``relative error'' guarantee \textit{in value space}~\cite{masson2019ddsketch}, as described above, though its merge operation (needed to handle the unbounded range case) may result in estimates that do not comply with the prescribed accuracy.
Finally, the \tdigest~\cite{tdigest-origPaper,dunning21} has been widely used in practice, and is described in more detail in the subsequent section.

\section{Algorithms }
\label{sec:algs}

\subsection{t-Digest}
\label{sec:tdigest}

The \tdigest consists essentially of a set of weighted \emph{centroids} $\{ C_1, C_2, \ldots \}$, with a weighted centroid $C_i = (c_i, w_i)$ representing $w_i \in \mathbb{Z}$ points near $c_i \in \mathbb{R}$. Centroids are maintained in the sorted order, that is, $c_i < c_j$ for $i < j$. Rank queries are answered approximately by accumulating the weights smaller than a query point, and performing linear interpolation between the straddling centroids. The permissible weight of a centroid is governed by a non-decreasing \textit{scale function} $k : [0, 1] \to \mathbb{R} \cup \{ \pm \infty \}$, which describes the maximal centroid weight as a function on quantile space: faster growth of $k$ enforces smaller centroids and hence higher accuracy. In particular, scale functions which grow rapidly near the tails $q =0, 1$ but are flat near $q = 0.5$ should produce accurate quantile estimates near $q=0, 1$, but trade accuracy for space near $q=0.5$.

The size of a \tdigest is controlled by a \emph{compression parameter} $\delta$, which (roughly) bounds from above the number of centroids used. (For the scale functions below, Dunning \cite{dunning2019size}  shows that this rough bound does hold for all possible inputs.)
Given $\delta$ and scale function $k$, the weight $w_i$ of centroid $C_i$ must satisfy
\begin{equation}\label{eqn:weight_limit}
    k\left(\frac{w_{<i} + w_i}{N}\right) - k\left(\frac{w_{<i}}{N}\right) \le \frac{1}{\delta}\,,
\end{equation}
where $w_{<i} := \sum_{j < i} w_j$ is the total weight of centroids to the left of $C_i$ and $N = \sum_j w_j$ is the total number of items summarized by the \tdigest.
The intuitive meaning of~\eqref{eqn:weight_limit} is that the size of $C_i$ is determined by 
the inverse derivative of the scale function evaluated at the fraction of items to the left of $C_i$. Thus $\delta$ controls the size-accuracy tradeoff, and the scale function $k$ allows for the accuracy to vary across quantile space. We will say a \tdigest is associated with the pair $(k, \delta)$.
The four common proposed scale functions are
\begin{align*}
    k_0(q) &= \frac{q}{2}   \quad\quad\quad\quad\quad\quad
    k_1(q) = \frac{1}{2\pi}\sin^{-1}(2q-1) \\
    k_2(q) &= \frac{1}{Z(N)}\log\frac{q}{1-q} \\
    k_3(q) &= \frac{1}{Z(N)}\begin{cases}
                                \log 2q & \text{if } q\le 0.5 \\
                                -\log 2(1 - q) & \text{if } q > 0.5
                            \end{cases}
\end{align*}
Here, $Z(N)$ is a normalization factor that depends on $N$. 
While $k_0$ provides a uniform weight bound for any $q\in [0,1]$, functions $k_1, k_2,$ and $k_3$ get steeper towards the tails $q = 0, 1$, which
leads to smaller centroids and higher expected accuracy near $q = 0, 1$.
Dunning~\cite{dunning2019conservation} proves that adding more data to a \tdigest or merging two instances of \tdigest
preserves the constraint~\eqref{eqn:weight_limit} if any of these four scale functions is used.
Ross~\cite{ross2020asymmetric} describes asymmetric variants of $k_1, k_2,$ and $k_3$, using the given function $k$ on $[\alpha, 1]$ and the linearization of $k$ at $\alpha$ on $[0, \alpha)$, and shows that the \tdigest associated with any of these modified scale functions accepts insertions and is mergeable. %

There are two main implementations of \tdigest that differ in how they incorporate an incoming item into the data structure.
The \emph{merging} variant maintains a buffer for new updates and once the buffer gets full, it performs a merging pass,
in which it treats all items in the buffer as (trivial) centroids, sorts all centroids, and merges iteratively any two consecutive centroids
whose combined size does not violate the constraint~\eqref{eqn:weight_limit}.
The \emph{clustering} variant finds the closest centroids to each incoming item $x$ and adds $x$ to a randomly chosen one of the closest centroids to $x$ 
that still has room for $x$, i.e, satisfies~\eqref{eqn:weight_limit} after accepting $x$.
If there is no such centroid, the incoming item forms a new centroid, which may however lead to exceeding
the limit of $\delta$ on the number of centroids --- in such a case, we perform 
 a merging pass over the centroids that is guaranteed to output at most $\delta$ centroids.

In an ideal scenario, the instance of the \tdigest would be strongly-ordered, that is, for each $x_i$ represented by centroid $C_i$ and each $x_j$ represented by $C_j$ with $i < j$,
it holds that $x_i < x_j$. This means that the (open) intervals spanned by data points summarized by each centroid are disjoint, which is the case when data are presented in the sorted order.
Together with assuming a uniform distribution of items across the domain, strongly-ordered \tdigest provides highly accurate rank estimates even for, say, $\delta = 100$.
However, strong ordering of centroids is impossible to maintain in a limited memory when items arrive in an arbitrary order
and in general, centroids are just \emph{weakly ordered}, i.e.,
only the means $c_i$ and $c_j$ of centroids $C_i$ and $C_j$ satisfy $c_i < c_j$ if $i < j$.
This weak ordering of centroids, together with non-uniform distribution of items, is the major cause of the error in rank estimates.
The hard instances and distributions presented in this paper are constructed so that they induce a ``highly weak'' ordering of centroids, meaning
that many values summarized by centroid $C_i$ will not lie between the means of neighboring centroids. %
As we show below, this leads to a highly biased error in rank estimates for certain inputs.

\subsection{\reqsketch}
\label{sec:reqsketch}

The basic building block of \reqsketch~\cite{cormode2020relative} is a \emph{compactor}, which is essentially a buffer of a certain capacity for storing items,
and the sketch consists of several such compactors, arranged in levels numbered from $0$.
At the beginning, we start with one buffer at level $0$, which accepts incoming items.
Once the buffer at any level $h$ gets full, we discard some items from the sketch in a way that does not affect rank estimates too much.
This is done by sorting the buffer, choosing an appropriate prefix of an even size, and removing all items in the chosen prefix from the buffer.
Of these removed items, a randomly chosen half is inserted into the compactor at level $h+1$ (which possibly needs to be initialized first),
namely, items on odd or even indices with equal probability, while removed items in the other half are discarded from the sketch.
This procedure is called the \emph{compaction operation}.
Similar compactors were already used to design the KLL sketch~\cite{kane2010optimal} and appear also, %
e.g.,\ in~\cite{manku1998approximate,manku1999random,agarwal2013mergeable,luo16_quantiles_experimental}.

Since the sketch consists of items stored in the compactors, one can view the set of stored items as a weighted coreset, that is,
a sample of the input where each item stored at level $h$ is assigned a weight of $2^h$ (akin to the weight of a centroid in the \tdigest setting). 
Observe that the total weight of items remains equal to the input size: when a compaction operation discards $r$ items, it also promotes $r$ items one level higher,
and as the weight of promoted items doubles, the total weight of stored items remains unchanged after performing a compaction.
To estimate the rank of some query item $y$, we simply calculate the total weight of stored items $x$ with $x < y$.
Overall, this approach gives a comparison-based algorithm and thus, its behavior only depends on the ordering of the input and
is oblivious to applying an arbitrary order-preserving transformation of the input, even non-linear (which does not hold for \tdigest).

The error in the rank estimate for any item is unbiased, that is, 0 in expectation, and since it is a weighted sum
of a bounded number of independent uniform $\pm 1$ random variables, its distribution is sub-Gaussian, so we can apply standard tail bounds for the Gaussian distribution.
For bounding the variance, the choice of the prefix in the compaction operation is crucial. For uniform error, it is sufficient to always compact the whole buffer; see e.g.\ \cite{karnin2016optimal}.
However, as argued in~\cite{cormode2020relative}, to achieve relative error accuracy, the prefix size should be at most half of the buffer size and chosen according to an
exponential distribution, i.e., with probability proportional to $\exp(-s / k)$, where $s$ is the prefix size and $k$ is a parameter of the sketch controlling
the accuracy. The prefix is actually chosen according to a derandomization of this distribution, which leads to a cleaner analysis and smaller constant factors.
The choice of prefix is qualitatively similar to the choice of scale function for a \tdigest.

The number of compactors is bounded by at most $O(\log(N/B))$, where $B$ is the buffer size,
since the weight of items is exponentially increasing with each level, %
and thus, the size of the sketch is $O(B\cdot \log(N/B))$. The analysis in~\cite{cormode2020relative} implies that if we take $B = O(\varepsilon^{-1}\cdot \sqrt{\log \eps N})$,
then the sketch provides rank estimates with relative error $\eps$ with constant probability, while its size is $O(\varepsilon^{-1}\cdot \log^{1.5} \eps N)$
(the parameter $k$ mentioned above should be set to $O(B / \log \eps N) = O(\varepsilon^{-1}/ \sqrt{\log \eps N})$).
Finally, \reqsketch is fully mergeable, meaning that after an arbitrary sequence of merge operations, the aforementioned accuracy-space trade-off
still holds.

\subsection{Implementation Improvements of \reqsketch}
\label{sec:reqSketchImprovements}

The brief description of \reqsketch above follows the outline in~\cite{cormode2020relative} and is suitable for a mathematical analysis.
In this section, we describe practical adjustments to \reqsketch that improve constant factors as well as the running time.
These were used in our proof-of-concept Python implementation and have been incorporated in the implementation in the DataSketches library.\footnote{
The Python implementation of \reqsketch by the fourth author is available at \url{https://github.com/edoliberty/streaming-quantiles/blob/master/relativeErrorSketch.py}.
DataSketches library is available at \url{https://datasketches.apache.org/};
\reqsketch is implemented in this library according to the aforementioned Python code.}
First, we apply practical improvements for the KLL sketch proposed by Ivkin et al.~\cite{ivkin2019streaming}. 
These include laziness in compaction operations, i.e., allowing the buffer to exceed its capacity provided that the overall capacity of the sketch is satisfied,
and flipping a random coin for choosing even or odd indexed items only during every other compaction operation at each level and otherwise using the opposite outcome of the previous coin flip.

Furthermore, new ideas for \reqsketch are desired.
A specific feature of \reqsketch, compared to the KLL sketch, is that the buffer size $B$ depends on the input size $N$, and this is needed
for the prefix choice by the (derandomized) exponential distribution.
For the theoretical result, it is possible to maintain an upper bound $\hat{N}$ on $N$ and once it is violated, use $\hat{N}^2$ as an upper bound and recompute the buffer size at all levels.
As it turns out, it suffices for the exponential distribution
to count compaction operations performed at each level $h$ and set the level-$h$ buffer size based on this count $C_h$, i.e., to $O(\varepsilon^{-1}\cdot \sqrt{\log C_h})$.
This results in levels having different capacities, with lower levels being larger as they process more items.
Since the level-$0$ compactor has the largest size, compaction operations at level $0$ are most costly as they take time proportional to the size.
To improve the amortized update time, when we perform a compaction operation at level $0$, we also compact the buffer at any other level that exceeds its capacity.
In other words, we restrict the aforementioned laziness in compaction operations to level $0$ only and this postpones the next compaction operation at level $0$
for as long as possible.
Experiments reveal that such a ``partial laziness'' improves the amortized update time significantly (depending on the number of streaming updates and parameters);
see Section~\ref{sec:updateTime}.

Overall, the empirical results in this paper and in the DataSketches library suggest that on randomly shuffled inputs \reqsketch with the improvements outlined above
has over 10 times smaller standard deviation of the error than predicted by the (already quite tight) mathematical worst-case analysis in~\cite{cormode2020relative}.
Furthermore, results on various particular input orderings (performed in the DataSketches library and with the proof-of-concept Python implementation)
reveal that the error on random permutations is representative, that is, we did not encounter a data ordering on which \reqsketch has higher
standard deviation.

\section{Careful Attack on \tdigest}
\label{sec:attack}

\noindent
\paragraph{The \tdigest and space bounds.}
The application of merging on the centroids ensures that the number of centroids maintained by the \tdigest cannot be too large. 
In particular, the space parameter $\delta$ is used to enforce that there are at most $\delta$ centroids maintained, 
no matter what order the items in the input stream arrive in, for any of the scale functions considered above~\cite{dunning2019size,ross2020asymmetric}. 
While this bound on the size of the summary is reassuring, it appears to stand in opposition to the space lower bounds referenced in Section~\ref{sec:related} above~\cite{hungting10,cormode2019tight}. 
This conflict is resolved by observing that the lower bounds hold against algorithms which only apply comparisons to item identifiers, whereas the \tdigest combines centroids by taking weighted averages, and uses linear interpolation to answer queries.
Still, we should not be entirely reassured. 
We can consider a ``sampling" (instead of averaging) variant of the \tdigest approach which stays within the comparison-based model, by keeping some (randomly-chosen) item in each centroid as its representative, and using this to answer queries.  
For sure, this sampling variant makes less use of the information available,
and lacks the careful hierarchical structure of the KLL sketch or \reqsketch.
But now this sampling variant is certainly subject to the space lower bounds, e.g., from~\cite{cormode2019tight}, and so cannot guarantee accuracy while using only %
$O(\delta)$ space.  
Since it is not so different to the full averaging \tdigest, the accuracy offered might be comparable, suggesting that this too may be vulnerable. 
This is the essence of our subsequent study, and we make use of the construction of a ``hard'' instance from~\cite{cormode2019tight} to help form the adversarial inputs to \tdigest.

\paragraph{Overview of the attack.}
The idea of the attack is to produce a (very) weakly ordered collection of centroids by wielding inspiration from~\cite{cormode2019tight} against the inner workings of \tdigest. 
For motivation, if centroid $(c_1, w_1)$ summarizes $S_1:= \{ x_1, \ldots, x_{w_1}\}$ and $x_i < c_1 < x_{i+1} = \text{next} (\sigma, c_1)$ (where $\text{next} (\sigma, c_1)$ is the smallest stream element larger than $c_1$; see \cite{cormode2019tight}), then for a query point in the interval $(c_1, x_{i+1})$, the rank will be overestimated by (at least) $w_1 - i$. When $c_1$ is very close to the median of $S_1$, this produces an overestimate of approximately $\frac{w_1}{2}$. In the worst case, the rank error is closer to $w_1$: if we operate with integer weights, in the ``lopsided" scenario in which $x_2 = x_3 = \cdots = x_{w_1}$, the rank is overestimated by at least $w_1 - 1$ in the interval $(c_1, x_2)$. Using real-valued weights, the rank error can be made as close to $w_1$ as desired, using a weighted set of the form $\{ (x_1, \epsilon), (x_2, w_1 - \epsilon)\}$.

If a set $S_2:= \{ y_1, \ldots, y_{w_2}\}$ is then inserted within the interval $(c_1, x_{i+1})$ and forms a new centroid $(c_2, w_2)$, then for a query point in the interval $(c_2, \text{next} (\sigma, c_2))$, the rank will be overestimated by $\frac{w_1 + w_2}{2}$ in the typical case,  or $w_1 + w_2 - 2$ in the worst case with integer weights if $S_2$ is similarly lopsided; see Figure~\ref{fig:attack-illustration}. (The orientation of the attack may be flipped in the evident way, producing underestimates of the rank.)

\begin{figure}
    \centering
    \includegraphics[scale=1]{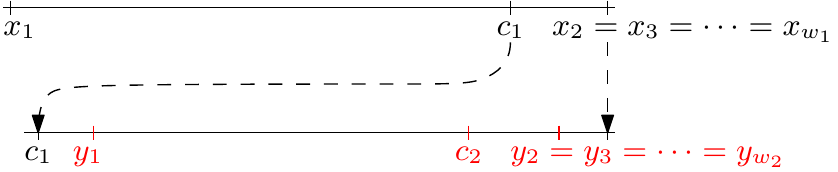}
    \caption{Illustration of the first two iterations of the attack that creates two lopsided centroids at $c_1$ and $c_2$,
    assuming for simplicity that none of items $y_1, \dots, y_{w_2}$ gets merged with the centroid at $c_1$.
    Centroids to the left or right of the centroid at $c_1$ are not shown.}
    \label{fig:attack-illustration}
\end{figure}

As this nested construction proceeds, some of the inserted points will be merged with a {previously created} centroid, and hence that portion of the inserted weight will not contribute to the rank error.
Overall, after adding an initial set of items, we choose a centroid $(c_1, w_1)$ that we will attack.
Assuming the merging pass always proceeds from left to right, the sequence of centroid weights progresses as:
\vspace{-1em}
\begin{equation} \begin{split} \label{insertions}
(- l_1 - ) , w_1, ( - r_1 -) \\
(- l_1 - ) , w_1 + v_1, w_2,  ( - r_1 -) \\
(- l_1 - ) , w_1 + v_1, w_2 + v_2 , w_3, ( - r_1 -)
\end{split} \end{equation}
and so on, 
where $(- l_1 - )$ denotes the ordered collection of centroids, having total weight $l_1$ and smaller than the centroid with weight $w_1$, and similarly $( - r_1 -)$ stands for centroids larger than the attacked centroid.
The idea is to add items of weight $v_1$ to the ``attacked'' centroid so that this centroid will be full and none of the next $w_2$ items will get merged into it, 
and similarly in the next iteration.
Thus, the first insertion has size $v_1 + w_2$, the second has size $v_2 + w_3$, etc. 
To see how this affects the rank error as the number of nested insertions of lopsided centroids increases, observe that if
\begin{equation} \label{infinite} \frac{\sum_{i=1}^N w_i + v_i}{l_1 + r_1} \to \infty \end{equation}
as $N \to \infty$ (i.e., if the weight not covered by these lopsided centroids is negligible),
then the asymptotic error can be made arbitrarily close to ${{\sum_i (w_i - 1)} \over {\sum_i w_i + v_i}}$ if integer weights are required, or ${{\sum_i w_i} \over {\sum_i w_i + v_i}}$ if this restriction is dropped.
If in addition
\begin{equation} \label{bounded_below}
{w_i \over {w_i + v_i}} \geq \gamma > 0
\end{equation}
for all $i$, then ${{\sum_i w_i} \over {\sum_i w_i + v_i}} \geq \gamma$ as well and hence $\gamma$ serves as a lower bound on the asymptotic error.

We will see that the parameter $\delta$ influences the rate of convergence to the asymptotic error (i.e., the growth of the quantity in \eqref{infinite}),
but the asymptotic error itself (i.e., $\gamma$ in \eqref{bounded_below}) cannot be reduced simply by taking $\delta$ large enough; in fact the asymptotic error is \textit{increasing} in $\delta$ for some important scale functions.
In the next sections we sketch how to achieve the inequalities \eqref{infinite} and \eqref{bounded_below} above for several scale functions of interest.

\subsection{Scale Functions with Bounded Derivative}
\begin{proposition}\label{prop:bounded_derivative}
Let $k$ be a scale function such that $0 < b \leq k'(q) \leq B$ for $q \in [0, 1]$. Then there exists a number $\gamma > 0$ and a $\delta_0 > 0$ such that for all $\delta \geq \delta_0$, the \tdigest associated with $(k, \delta)$ has asymptotic error at least $\gamma$ on the nested sequence of lopsided insertions described above.
\end{proposition}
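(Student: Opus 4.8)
The plan is to instantiate the nested lopsided construction concretely and then verify the two sufficient conditions \eqref{infinite} and \eqref{bounded_below} stated above, which together force the asymptotic error to be at least $\gamma$. The quantity to control throughout is the \emph{capacity} of a centroid, i.e.\ the largest weight $w$ it may carry at location $q = w_{<i}/N$ and total weight $N$; write $\mathrm{cap}(N)$ for this value. Applying the mean value theorem to the constraint \eqref{eqn:weight_limit} together with $b \le k' \le B$ shows that the left-hand side of \eqref{eqn:weight_limit} lies between $bw/N$ and $Bw/N$, so the capacity always lies in the interval $[N/(\delta B),\, N/(\delta b)]$. Thus it is $\Theta(N/\delta)$ and, crucially, scales linearly in $N$.

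First I would fix the construction deterministically: realize each centroid $C_i$ with self-weight $w_i$ equal to its capacity at the moment it is created, when the total weight is $N_{i-1}$, and then top it up with $v_i$ items to its capacity at the later total $N_i$, i.e.\ just before $C_{i+1}$ is spawned. Topping $C_i$ up to capacity is exactly what prevents the next block from being merged leftward into $C_i$, so each inserted block of size $v_i + w_{i+1}$ indeed deposits $v_i$ into $C_i$ and leaves a fresh lopsided centroid $C_{i+1}$ of weight $w_{i+1} = \mathrm{cap}(N_i)$. With this bookkeeping we have $w_i = \mathrm{cap}(N_{i-1})$ and $w_i + v_i = \mathrm{cap}(N_i)$.

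The heart of the argument is a growth estimate for $N_i$. Each step adds $v_i + w_{i+1}$, and since both $v_i \le \mathrm{cap}(N_i)$ and $w_{i+1} = \mathrm{cap}(N_i)$ are at most one capacity $\le N_i/(\delta b)$, we get $N_i - N_{i-1} \le 2N_i/(\delta b)$, hence $N_{i-1}/N_i \ge 1 - 2/(\delta b)$: the total weight grows by at most a bounded factor per step once $\delta > 2/b$. Condition \eqref{bounded_below} then follows from the capacity bounds, since $w_i/(w_i+v_i) = \mathrm{cap}(N_{i-1})/\mathrm{cap}(N_i) \ge (b/B)(N_{i-1}/N_i) \ge (b/B)(1 - 2/(\delta b)) \ge (b/B)(1 - 2/(\delta_0 b)) =: \gamma$, a positive constant uniform in $i$ and in $\delta \ge \delta_0$ for any $\delta_0 > 2/b$. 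For \eqref{infinite}, the weight $l_1 + r_1$ outside the nested region is fixed once the initial configuration is chosen, while $N_i - N_{i-1} \ge w_{i+1} = \mathrm{cap}(N_i) \ge N_i/(\delta B)$ forces geometric growth $N_i \to \infty$, so $\sum_i (w_i + v_i) = N_i - (l_1 + r_1) \to \infty$ and the left-hand side of \eqref{infinite} diverges. Invoking \eqref{bounded_below} then yields asymptotic error at least $\gamma$.

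The main obstacle is the capacity control, not the accounting. With only the global bounds $b \le k' \le B$ the capacity at total $N$ is pinned merely to $[N/(\delta B), N/(\delta b)]$, so comparing capacities at $N_{i-1}$ and $N_i$ naively costs a factor $b/B$; this is harmless for the existence claim $\gamma > 0$ but loses the sharp constant. To recover it --- and to see that $\gamma \to 1$ as $\delta \to \infty$, matching the remark that the error is \emph{increasing} in $\delta$ --- one argues that the quantile location of $C_i$ shifts only slightly between its creation and its filling (the weight to its left is frozen while $N$ moves by a factor $1 + O(1/\delta)$), so the value of $k'$ governing its capacity is essentially the same in numerator and denominator and cancels, giving $w_i/(w_i+v_i) \approx N_{i-1}/N_i \ge 1 - 2/(\delta b)$. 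A secondary point to verify is that the construction's invariant genuinely survives full left-to-right merging passes --- in particular that previously created lopsided centroids are not re-merged with one another as $N$ grows --- which is the delicate part of justifying that ``the nested sequence of lopsided insertions described above'' behaves as drawn in Figure~\ref{fig:attack-illustration}.
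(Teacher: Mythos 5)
Your argument is correct and reaches the same conclusion as the paper, but it establishes the key inequality \eqref{bounded_below} by a genuinely different (and somewhat cleaner) route. Both proofs begin from the same Mean Value Theorem observation, namely that a full centroid of weight $w$ at total weight $N$ must satisfy $N/(\delta B) \le w \le N/(\delta b)$, and both derive \eqref{infinite} identically from the lower end of this sandwich together with the fact that $l_1+r_1$ stays fixed. For \eqref{bounded_below}, however, the paper writes down the three MVT identities \eqref{eqn:bounded derivative eqns} for the consecutive full centroids of weights $w_i$, $w_i+v_i$, $w_{i+1}$, solves the resulting system to obtain the closed form \eqref{eqn:w_i overl w_i+v_i}, and bounds its numerator and denominator separately, arriving at $\gamma = (1-\epsilon)b^2/B^2$. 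You avoid the system entirely: you bound the per-step growth of the total weight, $N_{i-1}/N_i \ge 1 - 2/(\delta b)$, and combine it with the capacity sandwich to get $w_i/(w_i+v_i) \ge (b/B)\bigl(1-2/(\delta b)\bigr)$. This is more elementary and in fact yields a slightly stronger constant ($b/B$ rather than $b^2/B^2$ in the large-$\delta$ limit). Two caveats. First, your notation $\mathrm{cap}(N)$ suppresses the dependence of a centroid's capacity on its quantile location; this is harmless here because the sandwich bounds are location-uniform, but it should be said explicitly, and the solvability of the implicit fixed point defining $v_i$ and $w_{i+1}$ (your $N_i$ appears on both sides) deserves a sentence — this is precisely what the paper's system \eqref{eqn:bounded derivative eqns} resolves. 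Second, your closing claim that $\gamma \to 1$ as $\delta \to \infty$ by ``cancelling'' the values of $k'$ at the creation and filling times is only heuristic for general $k$ (it needs continuity of $k'$ and control of its modulus over an $O(1/\delta)$ shift); the paper establishes this limit only for specific scale functions such as $k_0$, where $k'$ is constant. Neither caveat affects the proposition itself, which asserts only that some $\gamma>0$ exists. Your remark that one must check that earlier lopsided centroids survive later left-to-right merging passes is well taken; the paper's construction handles this exactly as you do, by always topping the attacked centroid up to capacity so that the next block cannot be absorbed into it.
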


\begin{proof}[Sketch of proof.]
Since by~\eqref{eqn:weight_limit}, the function $k$ increases by ${1 \over \delta}$ on the interval $I_w$ in quantile space occupied by a full centroid $(c, w)$, the Mean Value Theorem guarantees a point $q_w \in I_w$ such that $k'(q_w) |I_w| = {1 \over \delta}$, where $|I_w|$ denotes the length of the interval. 
Note that $|I_w| = w / (l + w + r)$, where $l$ and $r$ are the weights to the left and right of centroid $(c, w)$, respectively.
We apply this on centroid $(c_i, w_i + v_i)$ just after the $i$-th iteration of the attack, with $w = w_i + v_i$.
Since $\frac{w_i + v_i}{l_1 + r_1} \geq |I_w|$ and $k'(q_w) \leq B$, we obtain $\frac{w_i + v_i}{l_1 + r_1} \geq {1 \over {\delta B}}$. Taking $N$ large enough (depending on $\delta$), the desired limiting behavior \eqref{infinite} is shown.

For the second inequality, we apply similar arguments to the intervals of weights $w_i$, $w_i + v_i$, $w_{i+1}$ appearing in consecutive iterations of the attack. Clearing denominators, we obtain equations:
\begin{equation} \label{eqn:bounded derivative eqns} \begin{split}
\delta k'(q_{w_i}) w_i &= l + w_i + r \\
\delta k'(q_{w_i + v_i}) (w_i+ v_i) &= l + w_i + v_i + w_{i+1} + r \\
\delta k'(q_{w_{i+1}}) w_{i+1} &= l + w_i + v_i + w_{i+1} + r 
\end{split} \end{equation}
From which it follows that
\begin{equation}\label{eqn:w_i overl w_i+v_i}
    \frac{w_i}{w_i+v_i} = \frac{\delta k'(q_{w_i+v_i}) k'(q_{w_{i+1}}) - k'(q_{w_i+v_i}) - k'(q_{w_{i+1}})}{(\delta k'(q_{w_i}) - 1)k'(q_{w_{i+1}})}.
\end{equation}
The denominator is bounded above by $\delta B^2$. 
For any $\epsilon > 0$, we can find $\delta_0 > 0$ such that for $\delta \geq \delta_0$, the numerator is bounded below by $(1 - \epsilon) \delta b^2$.
Hence $\frac{w_i}{w_i+v_i}$ is bounded below by $\frac{(1 - \epsilon) b^2}{B^2}$ and \eqref{bounded_below} is shown as well.
\end{proof}

A consequence of the proof is that for $k_0(q) = q/2$ (the linear scale function producing roughly equal-weight centroids and expected to have constant accuracy), the asymptotic error according to~\eqref{eqn:w_i overl w_i+v_i} is $\frac{\delta - 2}{\delta - 1}$, i.e., for sufficiently large $\delta$, the approximations can be arbitrarily poor.

\subsection{Attacks on $k_2$ and $k_3$}%

Without loss of generality, we assume the attack occurs where $q > 0.5$
\textit{according to the \tdigest}, i.e., the attacked centroid has more weight to its left than to its right,
hence the growth conditions for $k_3$ under \eqref{insertions} give rise to the following system of equations (derived similarly to~\eqref{eqn:bounded derivative eqns}, %
using the definition of $k_3$):

\begin{equation} \label{k3_explicit} %
\exp\left( \frac{1}{ \delta} \right)
= \frac{w_i+r_1}{r_1}
= \frac{w_i+v_i+w_{i+1}+r_1}{w_{i+1} +r_1}
= \frac{w_{i+1}+r_1}{r_1}
\end{equation}

Solving yields $\frac{w_i}{w_i+v_i} = \frac{1}{\exp(\frac{1}{\delta})}$ and also $\frac{w_i}{r_1} = \exp(\frac{1}{\delta}) - 1$. We may assume $l_1 < C(\delta) r_1$ and hence $\frac{w_i}{l_1 + r_1} > \frac{w_i}{(C(\delta) + 1) r_1} = \frac{\exp(\frac{1}{\delta}) - 1}{(C(\delta) + 1)}$. From this the limiting behavior \eqref{infinite} follows; 
as $\delta \to \infty$, $\exp(\frac{1}{\delta}) \to 1^+$ and hence the asymptotic error gets larger (and approaches $1$) for larger values of $\delta$.
Hence, in the worst case, the quantile error of \tdigest with scale function $k_3$ can be arbitrarily close to 1.

While $k_2$ does not seem as amenable to direct calculation, we observe that for all $q \in (0.5, 1)$, we have $k_3'(q) < k_2'(q)  < 2 k_3'(q)$. Therefore the growth of $k_2$ on an interval can be bounded on both sides in terms of the growth of $k_3$, and the system of equations \eqref{k3_explicit} has a corresponding system of inequalities. Eventually we find $\frac{w_i}{r_1} >  \exp(\frac{1}{2 \delta}) -1$, giving \eqref{infinite}, and also get 
$$\frac{w_i}{w_i+v_i} > \frac{1}{\exp(\frac{1}{\delta}) ( \exp(\frac{1}{2 \delta}) + 1 )}.$$
This lower bound on the asymptotic error approaches $0.5$ as $\delta$ increases,
which implies that the quantile error of \tdigest with $k_2$ can be arbitrarily close to $0.5$.

\section{Empirical results}
\label{sec:empirical}

In this section, we study the error behavior of \tdigest and of \reqsketch on inputs constructed according to the ideas described in Section~\ref{sec:attack}
and also on inputs consisting of i.i.d.\ items generated by certain non-uniform distributions.
Furthermore, we compare the merging and clustering implementations of \tdigest, and also empirically evaluate update times of the two algorithms.

The experiments are performed using the Java implementation of \tdigest by Dunning
and the Java implementation of \reqsketch by the Apache DataSketches library.\footnote{
All code used is open source, and all scripts and code can be downloaded from our repository at \url{https://github.com/PavelVesely/t-digest/}.
For more details about reproducing our empirical results, see Section~\ref{sec:reproducibility}.}
By default, we run \tdigest with compression factor $\delta = 500$,
but we can obtain similar results for other values of $\delta$.
We then choose the parameters of \reqsketch so that its size is similar to that of \tdigest with $\delta = 500$ for the particular input size
(recall that the size of \reqsketch depends on the logarithm of the stream length).
The measure for the size we choose is the number of bytes of the serialized data structure.
For instance, if the input size is $N = 2^{20}$, then using $k=4$ as the accuracy parameter of \reqsketch leads to essentially the same size of the two sketches,
which is about $2.5$ kB.

\subsection{Implementation of the Attack}
\label{sec:attackImpl}

In implementing the ideas of Section \ref{sec:attack}, we note that the size of the interval between the attacked centroid and the next stream value shrinks exponentially as the attack proceeds. 
Hence the attack as described may run out of precision (at least for \texttt{float} or \texttt{double} variable types) after only a few iterations. 
To circumvent this difficulty, we target the attack in the neighborhood of zero, where more precision is available, so the attack as implemented\footnote{See again \url{https://github.com/PavelVesely/t-digest/}.} chooses the largest centroid less than zero, and uses the smallest positive stream value as its ``next'' element.
Additionally, the effectiveness of the attack can be sensitive to the exact compression schedule used by an implementation (particularly the clustering variant). 
Hence the results of the attack are somewhat dependent on the particular manner in which values are chosen from the interval for the ensuing iteration.
Nevertheless, equipped with knowledge of the parameters of the \tdigest, the ability to query for centroids near zero and centroid weights, and memory of the actual stream presented to the \tdigest, an adversary may generate a stream on which the \tdigest performs rather poorly. Figure~\ref{fig:careful_attack_errors} shows the (additive) quantile error %
of both the merging and clustering implementations of \tdigest, all using scale function $k_0$ and $\delta = 500$ (the error of \reqsketch is not shown as it is very close to 0\%, similarly as in the plots below). This shows that the vulnerability of \tdigest is not due to the specifics of implementation choices, but persists across a range of approaches.\footnote{A similar construction may be applied to $k_2$ or $k_3$, but as data accumulates on both sides of zero (for precision reasons), the error is not pushed into the tails. Higher precision computation (using, e.g., \texttt{BigDecimal}) would seem necessary for a practical implementation of the attack exhibiting poor performance in the tails of the distribution for the logarithmic scale functions.}

\begin{figure}[t]
    \centering
    \includegraphics[scale=0.6]{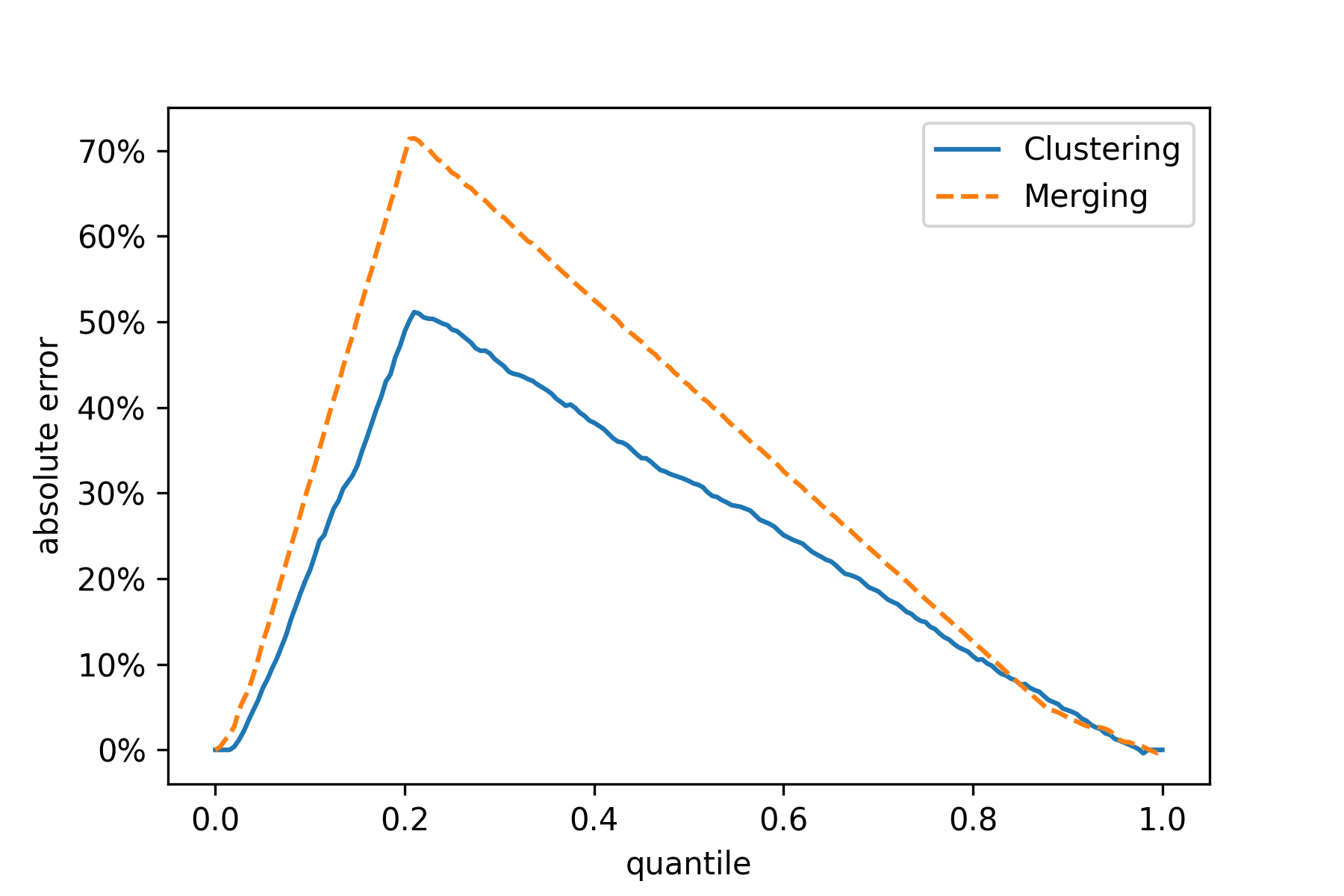}
    \caption{\tdigest on carefully constructed input}
    \label{fig:careful_attack_errors}
\end{figure}

\subsection{Randomly Generated Inputs}
\label{sec:iidSamples}

Here, we provide results for inputs consisting of i.i.d.\ items generated by some distributions.
Our purpose is to study the behavior of the algorithms when the order of item arrivals is not adversarially chosen, demonstrating that the class of ``difficult'' inputs is larger than just the carefully targeted attack stream. 
Items drawn i.i.d.\ form a well-understood scenario:
if we knew the description of the distribution, the quantiles of the distribution serve as accurate estimates of the quantiles of a sufficiently large input sampled from that distribution.
However, we only study algorithms not designed for any particular distribution, and so inputs consisting of i.i.d.\ items can still present a challenge.

It is already known that on (i.i.d.) samples from some distributions, such as the uniform or the normal distribution,
\tdigest performs very well~\cite{tdigest-origPaper,DataSketches-comparsion_t-digest_and_KLL,ross2020asymmetric}, and so we do not replicate these scenarios.
Instead, we study a class of distributions inspired by the attack of Section~\ref{sec:attack}
and show under which conditions \tdigest fails to provide any accurate rank estimates.

In the attack described in Section~\ref{sec:attack}, we carefully construct a sequence of insertions into \tdigest so that the error gets arbitrarily large.
Recall that with each iteration of the attack the interval where future items are generated shrinks exponentially,
while the number of items increases linearly.
Surprisingly, the large error is provoked not only by the \emph{order} in which items arrive but also by the large range of \emph{scales} of the input, as we demonstrate below.

Taking this property of the attack as an inspiration, a natural idea for a hard distribution is to generate items uniformly on a logarithmic scale, i.e.,
applying an exponential function to an input drawn uniformly at random.
This is called the \emph{log-uniform distribution}. 
To capture the increasing number of items in the iterations of the attack, 
we square the outcome of the uniform distribution used in the exponent and
finally, we let it have a negative or positive value with equal probability, giving a \emph{signed log-uniform${}^2$ distribution}. 
Thus, each item is distributed according to 
\begin{equation}\label{eqn:harddistribution}
    \harddistrib \sim (-1)^b\cdot 10^{(2\cdot R^2 - 1)\cdot E_{\max}}
\end{equation}
where $b\in \{0,1\}$ is a uniformly random bit, $R$ is a uniformly random number between $0$ and $1$, and
$E_{\max}$ is a maximum permissible exponent (for base~10) of the \texttt{double} data type, which is bounded by~308 in the IEEE 754-1985 standard.\footnote{
In our implementation, we use $E_{\max} = \log_{10} (M / N)$, where $M$ is the maximum value of \texttt{double},
so that \tdigest does not exceed $M$ when computing the average of any centroid.}
The input is then constructed by taking $N$ samples from $\harddistrib$.

Figure~\ref{fig:harddistribution_k_2_asym} shows the quantile error for \tdigest with $\delta = 500$ and asymmetric $k_2$ scale function together with the error of \reqsketch on this input with $N=2^{20}$.
We show the error for both the merging and clustering variants of \tdigest and take the median error of each rank, based on $2^{12}$ trials,
while for \reqsketch, we plot the 95\% confidence interval of the error, i.e., $\pm 2$ standard deviations
(recall that the error of \reqsketch for any rank is a sub-Gaussian zero-mean random variable).
The absolute error is plotted on the y-axis. 
The relative error requirement in this experiment is that the error should be close to zero for the high quantiles (close to 1.0), with the requirement relaxing as the quantile value decreases towards 0.0. 
This is observed for \reqsketch, whose error gradually increases in the range 1.0 to 0.5, before approximately plateauing below the median. 
However, the two \tdigest variants show larger errors on high quantiles, approaching -30\% absolute error at $q$=0.8 for the merging variant. 
Note that if we simply report the maximum input value for
$q=0.8$, this would achieve  +20\% absolute error.
The resulting size of the merging \tdigest is $2\,752$ bytes, while the clustering variant needs just $2\,048$ bytes and the size of \reqsketch is $2\,624$ bytes.

For comparison, Figure~\ref{fig:loguniform_k_2_asym} shows similar results on the signed log-uniform distribution, i.e., items generated
according to $(-1)^b\cdot 10^{(2\cdot R - 1)\cdot E_{\max}}$, with $b, R,$ and $E_{\max}$ as above.
The only notable difference is that \tdigest achieves a slightly better accuracy.
Our further experiments suggest that the error of \tdigest with other scale functions capturing the relative error, such as the (asymmetric) $k_3$ function,
is even worse than if asymmetric $k_2$ is used.
We obtain analogous results even for larger values of $\delta$ (with an appropriately increased parameter $k$ for \reqsketch),
although this requires larger stream lengths $N$.

A specific feature of both $\harddistrib$ and the log-uniform distribution is that there are numbers ranging (in absolute value) from $10^{-302}$ to $10^{302}$.
Such numbers, however, rarely appear in real-world datasets and one may naturally wonder what happens if we limit the range,
by bounding the parameter $E_{\max}$ of these distributions. %
Figure~\ref{fig:varying_E_max} shows the dependence of the \emph{average relative error} on $E_{\max}$\footnote{The relative error for
a particular rank $q$ is computed as the absolute quantile error at $q$ (i.e.\ the difference between $q$ and the estimated rank for item $y$ with $R(y) = q$)
divided by $1-q$; thus, the relative error is amplified for ranks close to $1$.
We run $2^8$ trials and for each rank we take the median relative error for \tdigest and both the median and $+2$ standard deviations (+2SD) of the error distribution for \reqsketch.
These median or +2SD errors are aggregated over all ranks by taking the average.};
a similar plot can also be obtained for the maximal relative error. %
Interestingly, the clustering variant of \tdigest performs far better than the merging variant w.r.t.\ this measure.
For a small $E_{\max}$, say, $E_{\max}\le 10$, \tdigest is clearly preferable to \reqsketch as a very large range of numbers is needed to enforce a large error for \tdigest.
On the other hand, as the error of \reqsketch is unbiased, its median relative error is indeed very close to 0.

We also note that the clustering variant of \tdigest appears to have better (though admittedly still inadequate) accuracy on the hard inputs than does the merging variant. The merging variant is generally preferred due to its faster updates (see Section~\ref{sec:updateTime}) and avoidance of dynamic allocations. Thus, those efficiencies may come with a price of higher error.

\begin{figure}[t]
    \centering
    \includegraphics[scale=0.6]{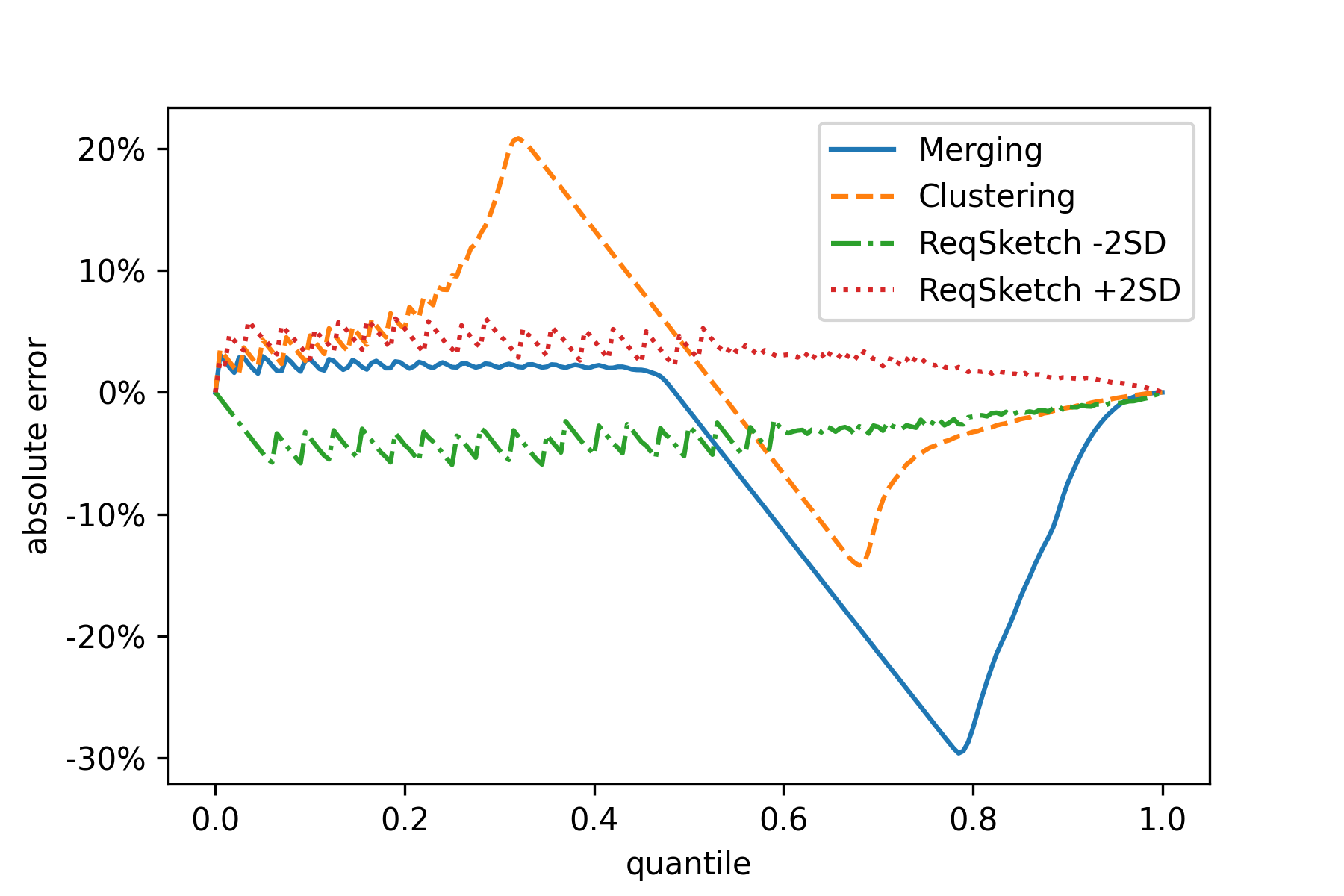}
    \caption{\tdigest on i.i.d.\ samples from $\harddistrib$ ($\pm 2$SD for \reqsketch means $\pm 2$ standard deviations)} %
    \label{fig:harddistribution_k_2_asym}
\end{figure}

\begin{figure}[t]
    \centering
    \includegraphics[scale=0.6]{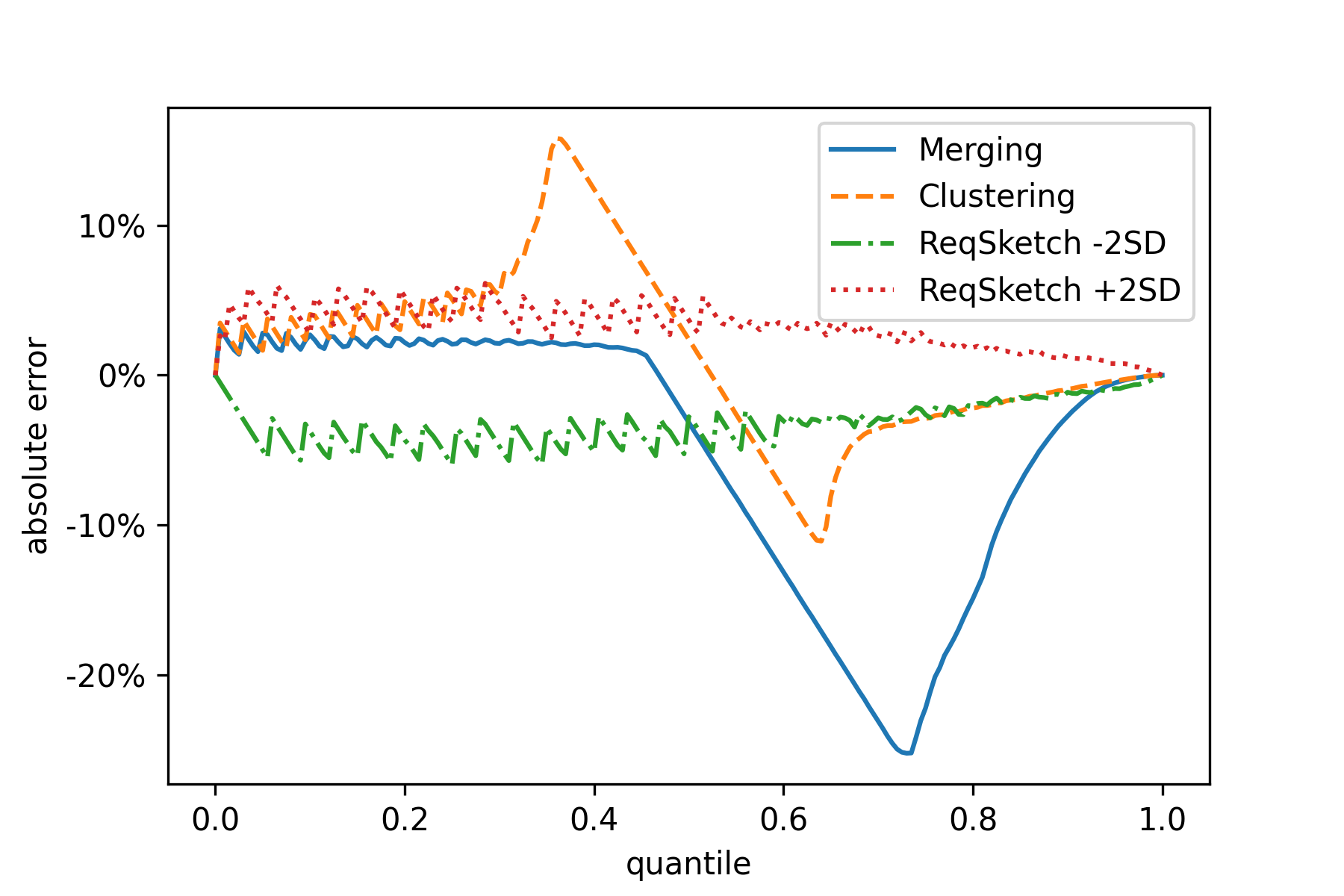}
    \caption{\tdigest on i.i.d.\ samples from the signed log-uniform distribution.}
    \label{fig:loguniform_k_2_asym}
\end{figure}

\begin{figure}[t]
    \centering
    \includegraphics[scale=0.6]{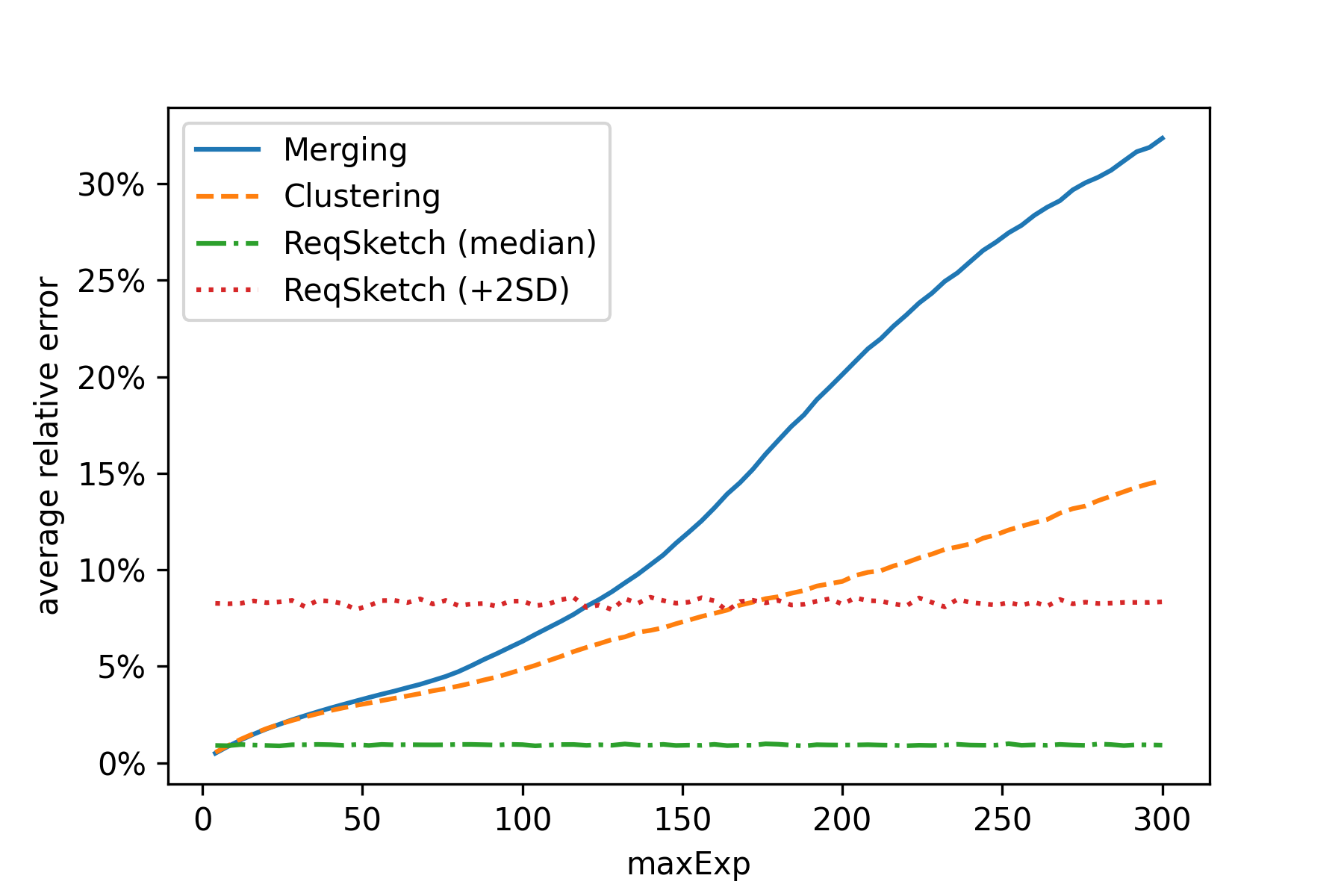}
    \caption{Average relative error on i.i.d.\ samples from $\harddistrib$, depending on $E_{\max}$ (denoted maxExp).} %
    \label{fig:varying_E_max}
\end{figure}

\paragraph{Explanation of the large error for \tdigest.}
The particularly striking feature of the merging \tdigest error in Figure~\ref{fig:harddistribution_k_2_asym} is the straight line which approximately goes from rank~$0.48$ to rank~$0.79$.
As it turns out, all ranks in this range have essentially the same estimated rank returned by \tdigest. 
This is because the last centroid with mean below $0$ has mean equal to $\approx -10^{-302}$,
while the next centroid in the order has mean $\approx +10^{-101}$ (for one particular trial). Hence, there is no centroid to represent values 
$[0, 10^{-101}]$, while according to the definition of the hard distribution in~\eqref{eqn:harddistribution},
approximately 40\% of items fall within that range. Furthermore, most of this 40\%  lies in a much smaller interval of, say, $[0, 10^{-110}]$,
meaning that linear interpolation does not help to make the estimated ranks more accurate.
A similar observation can be made about the error of the clustering variant, as well as in other scenarios. While the infinitesimal values are not well-represented by centroids, they distort the centroid means. In the clustering variant, for example, all the centroids are pulled towards zero by being averaged with infinitesimal items, leading to overestimates of quantiles for $q < 0.5$ and underestimates of quantiles for $q > 0.5$.\footnote{A similar phenomenon occurs for the merging variant, but the error shifts since merging passes always proceed from left to right. This is due to alternating merging passes not being properly supported for asymmetric scale functions.}

As outlined in Section~\ref{sec:tdigest}, centroids of \tdigest are only weakly ordered, which in particular means that the numbers covered by one centroid may be larger
than the mean of a subsequent centroid of \tdigest.
The mixed scales, when presented in random order, lead to centroids with a high degree of overlap, at least measured locally (on consecutive centroids): there are substantial regions in centroid weight space in which the two-sample Kolmogorov-Smirnov statistic computed on neighboring centroids is close to zero. The centroids produced by the careful attack, by contrast, are pairwise somewhat distinguishable, but have a global nested structure, causing the \tdigest to have large error.\footnote{See \url{https://github.com/PavelVesely/t-digest/blob/master/docs/python/adversarial_plots/notebooks/overlap_computation.ipynb} for the supporting computations.}
Note the same data presented \textit{in sorted order} does not pose nearly the same difficulty for \tdigest, as the infinitesimal items eventually form their own centroids and give the \tdigest sufficient detail on that scale.

\subsection{Update Time}
\label{sec:updateTime}

Finally, we provide empirical results that compare the running time of 
the open source Java implementations of \tdigest and \reqsketch.
We evaluate both merging and clustering variants of \tdigest associated with $\delta = 500$ and asymmetric $k_2$ scale function, 
and choose the accuracy parameter of \reqsketch as $k=4$.
Additionally, we include the KLL sketch as well, with accuracy parameter $k = 100$.
Table~\ref{tab:update_times} shows amortized update times in nanoseconds (rounded to integer) on an input consisting of
$N$ i.i.d.\ samples from a uniform distribution, for $N = 2^{30}$.
The results are obtained on a 3~GHz AMD EPYC 7302 processor.
We remark that the update times remain constant for varying $N$, unless $N$ is too small (in the order of thousands).
In summary, the results show \reqsketch to be more than two times faster than merging \tdigest and about $4.5$ times faster than clustering \tdigest.

\begin{table}[b]
    \centering
    \caption{Average update time in nanoseconds.}
    \label{tab:update_times}
   \begin{tabular}{c|c|c|c}
         Merging \tdigest & Clustering \tdigest & \reqsketch & KLL  \\
         \hline
         129 & 251 & 55 & 41
    \end{tabular}
 \end{table}

Furthermore, we compare \reqsketch with the partial laziness technique from Section~\ref{sec:reqSketchImprovements}
(which is the default option) and with ``full laziness'' that was proposed for the KLL sketch in~\cite{ivkin2019streaming}.
Figure~\ref{fig:update_times_reqsketch} shows the average update times of both variants for varying accuracy parameter $k$ on $N$ i.i.d.\ samples from a uniform distribution, for $N = 2^{30}$.
This implies that the partial laziness idea provides a significant speedup, especially for larger values of $k$.

\begin{figure}[t]
    \centering
    \includegraphics[scale=0.56]{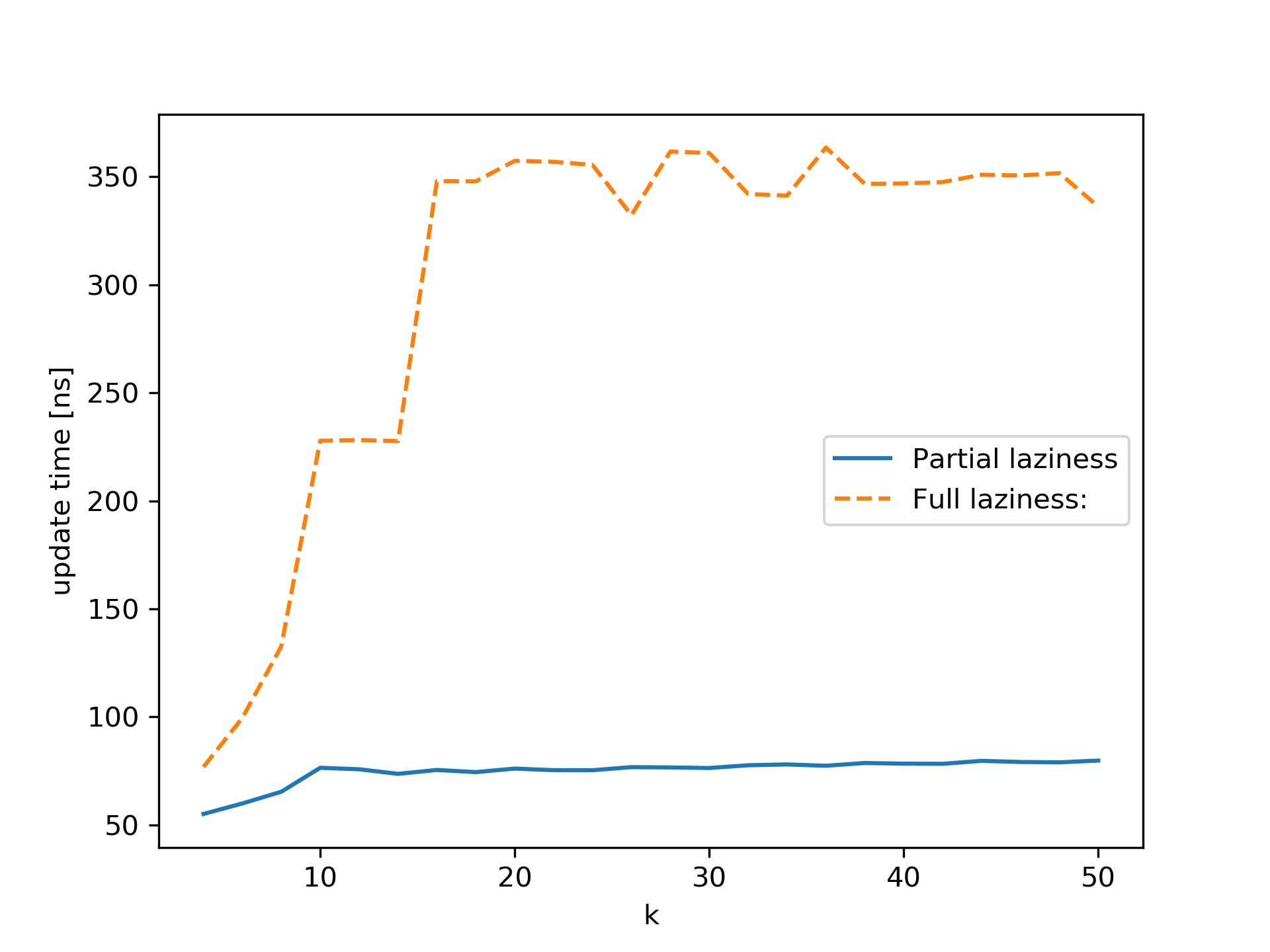}
    \caption{Average update time of \reqsketch in nanoseconds.}
    \label{fig:update_times_reqsketch}
\end{figure} 

\section{Concluding remarks}

Our standpoint as authors of this work cannot be viewed as neutral: 
two of us (G.\ Cormode and P.\ Veselý) are co-authors of the \reqsketch paper~\cite{cormode2020relative}, and two of us (at Splunk) have deployed \tdigest and analyzed its behavior. 
Our collaboration in this work was driven by a desire to better understand these algorithms, their strengths and weaknesses, and provide advice to other data scientists on how to make best use of them. 
As foreshadowed in the introduction, our view at this conclusion is perhaps more complicated than when we started, when we hoped for a simple answer.  
From our studies, the main takeaway is that \tdigest can fail to give the desired levels of accuracy on %
inputs with a highly non-uniform distribution over the domain.
However, these inputs are far from appearing natural, and should not significantly trouble any teams who have deployed this algorithm. 
Our second observation is that as implemented, the \reqsketch is pretty fast in practice, and quite reliable in accuracy, despite its somewhat offputtingly technical description.  
We did not observe any examples where its error shoots up, but on ``expected'' distributions (like the $\harddistrib$ with a small $E_{\max}$ in Figure~\ref{fig:varying_E_max}), it is appreciably less accurate than \tdigest. 
There is no clear win for the pragmatic or theoretically minded solutions, at least in this case. 
In the final analysis, our advice to practitioners is to consider both styles of algorithm for their applications, and to weigh up carefully the tradeoff between performance in the worst case to performance on average. 

\paragraph{Acknowledgements.}
We wish to thank Lee Rhodes and other people working on the DataSketches project for many useful discussions about implementing \reqsketch.
Work done while P.\ Vesel\'{y} was at the University of Warwick.
G. Cormode and P.\ Vesel\'{y} were supported by 
European Research Council grant ERC-2014-CoG 647557.
P.\ Vesel\'{y} was also partially supported by GA \v{C}R project 19-27871X and by Charles University project UNCE/SCI/004.

\bibliographystyle{plain}

\appendix
\section{Reproducibility}
\label{sec:reproducibility}

All code used in obtaining the experimental results is open source and can be downloaded from 
\begin{equation*}
    \text{\url{https://github.com/PavelVesely/t-digest/}}\,,    
\end{equation*}
where we also provide documentation and resources needed to reproduce our experiments.
Our repository is a clone of the original \tdigest repository available at
\url{https://github.com/tdunning/t-digest} (the original repository was last merged into ours on 2021-01-28),
and it additionally incorporates asymmetric scale functions from \url{https://github.com/signalfx/t-digest/tree/asymmetric}. The asymmetric scale functions provide a natural \tdigest analogue of a \reqsketch with guarantees on one end of the distribution.

The DataSketches library is available at \url{https://datasketches.apache.org/}
and we took the particular Java implementation of \reqsketch from the GitHub repository at \url{https://github.com/apache/datasketches-java}.
For technical reasons, the code we use in our experiments requires the \reqsketch algorithm to work with the \texttt{double} data type,
however, the DataSketches implementation works with \texttt{float} numbers only. We provide an adjusted implementation using \texttt{double}
inside our above-mentioned repository for reproducing the experiments.
We also incorporate the KLL sketch from the DataSketches library into our repository, with a similar adjustment to the \texttt{double} type.

\subsection{Main Experimental Setups}

We implemented three experimental setups:
\begin{itemize}
    \item A careful construction of a hard input for \tdigest, according to Sections~\ref{sec:attack} and~\ref{sec:attackImpl}. 
    \item A generator of i.i.d.~samples from a specified distribution, for reproducing results in Section~\ref{sec:iidSamples}.
    A variant of this experiment allows to have variable parameter $E_{\max}$ of $\harddistrib$ and the log-uniform distribution, for reproducing results in Figure~\ref{fig:varying_E_max}.
    \item A comparison of the average update times of \tdigest (both the merging and clustering variants), \reqsketch,
    and the KLL sketch, for reproducing results in Section~\ref{sec:updateTime}.
\end{itemize}

The parameters of these experiments are adjustable by a configuration file, which allows, for example,
to set the compression parameter $\delta$ and scale function for \tdigest and the accuracy parameter $k$  for \reqsketch.
Each of the experiments outputs a CSV file with results into a specified directory. 
See the README file in the repository for more details on how to run the experiments and how to produce the plots.

The first two experiments output statistics on absolute error of \tdigest and of \reqsketch for each of
200~evenly spaced normalized ranks (the number of these ranks can be adjusted). Furthermore, we perform $T$ trials, where $T$ is adjustable and set to $2^{12}$ by default,
and output the median error for each variant of \tdigest and the 95\% confidence interval for \reqsketch (recall that the error of \reqsketch is unbiased; see Section~\ref{sec:reqsketch}). 
More precisely, the errors for each rank are accumulated using the KLL sketch with accuracy parameter $k=200$ and then 
from this sketch we recover an approximate median or appropriate quantiles for two standard deviations of the normal distribution.
The error introduced by using the KLL sketch instead of exact quantiles is negligible as we do not need to estimate extreme quantiles of the distribution.
The experiment with variable $E_{\max}$ (for Figure~\ref{fig:varying_E_max}) outputs average and maximal relative errors\footnotemark[6] of \reqsketch and both variants of \tdigest
for each tested value $E_{\max}$.

\subsection{Auxiliary Experiments}

The possibility to adjust the parameters allows for verification of other claims in this paper. For instance, one can obtain plots similar to those in~Figures~\ref{fig:harddistribution_k_2_asym}-\ref{fig:varying_E_max} for (asymmetric) scale function $k_3$ or for other values of $\delta$.
We remark that in general, larger values of $\delta$ require larger values of the input size $N$ to induce poor accuracy levels for \tdigest, compared to similarly-sized \reqsketch.

Additionally, the configuration files may be altered to produce more verbose output, namely to also write the datapoints underlying the centroids in the resulting \tdigest. Some plots describing the local overlap of centroids are available in the repository. These help to illuminate the nature of the weak ordering of centroids discussed in Section \ref{sec:iidSamples}.

Finally, further experiments with \reqsketch can be performed with our proof-of-concept Python implementation and in the DataSketches library.
The Python implementation of \reqsketch by the fourth author is available at \url{https://github.com/edoliberty/streaming-quantiles/blob/master/relativeErrorSketch.py}
and the generator of some particular data orderings is at \url{https://github.com/edoliberty/streaming-quantiles/blob/master/streamMaker.py}.
Moreover, the DataSketches library provides a repository for doing extensive accuracy and speed experiments with \reqsketch (as well as other sketches in this library),
which is available at \url{https://github.com/apache/datasketches-characterization/}.

\end{document}